\documentclass[pra,twocolumn,superscriptaddress,
preprintnumbers,amsmath,amssymb,floatfix]{revtex4-2}
\usepackage{enumerate}
\usepackage{kotex}
\usepackage{graphicx}
\usepackage{float}
\usepackage{subfigure}
\usepackage{amsthm}
\usepackage{tensor}
\usepackage{color}
\usepackage[all]{xy}
\usepackage{tikz}
\usepackage{dsfont}
\usepackage{times,txfonts}
\usetikzlibrary{positioning}
\usepackage{braket}
\usepackage{bbm}

\newtheorem{Thm}{Theorem}

\theoremstyle{definition}

\begin{document}

\title{Quantum teleportation with coherent error in Bell-state measurement}

\author{Jeonghyeon Shin} 
\affiliation{Center for Quantum Technology, Korea Institute of Science and Technology (KIST), Seoul 02792, Korea}
\affiliation{Department of Mathematics and Research Institute for Basic Sciences, Kyung Hee University, Seoul 02447, Korea}

\author{Jaehak Lee} 
\affiliation{Center for Quantum Technology, Korea Institute of Science and Technology (KIST), Seoul 02792, Korea}
\affiliation{Division of Quantum Information, KIST School, Korea University of Science and Technology (UST), Seoul 02792, Korea}

\author{Soojoon Lee} 
\affiliation{Department of Mathematics and Research Institute for Basic Sciences, Kyung Hee University, Seoul 02447, Korea}
\affiliation{School of Computational Sciences, Korea Institute for Advanced Study, Seoul 02455, Korea}

\author{Seung-Woo Lee}
\email{swleego@gmail.com}
\affiliation{Department of Physics, Pohang University of Science and Technology (POSTECH), Pohang 37673, Korea}

\date{\today}

\begin{abstract}
Quantum teleportation is a fundamental protocol in quantum information science, whose performance is conventionally evaluated under the assumption of ideal Bell-state measurements. In realistic implementations, however, joint measurements are often imperfect and can deviate from maximally entangled bases due to coherent errors in entangling operations. In this work, we analytically show how the entanglement of joint measurements determines teleportation performance and propose a strategy to overcome the limitations imposed by partially entangled joint measurements to recover the unit teleportation fidelity. We then derive an exact equation revealing a quantitative relation between measurement entanglement, channel entanglement, and the success probability to realize the unit-fidelity teleportation. We illustrate our results using elegant joint measurements and realistic coherent error models arising from imperfect entangling operations in quantum systems. Our work provides fundamental insight into the role of measurement entanglement in quantum teleportation and establishes a practical framework for achieving faithful teleportation without requiring substantial modifications to existing hardware.
\end{abstract}
\maketitle

\section{Introduction}

Quantum teleportation is a key protocol in quantum information science, enabling the faithful transfer of an unknown quantum state using shared entanglement and classical communication \cite{Bennett1993, NielsenChuang, Pirandola2015}. Since its original proposal \cite{Bennett1993}, teleportation has been extensively studied and experimentally demonstrated not only as a fundamental primitive \cite{Bouwmeester97, Boschi98, furu, Barrett04, Olmschenk09, Steffen13, wang15} but also as a building block for quantum computing and quantum communication architectures \cite{Gottesman1999, Briegel1998, Raussendorf2001, Kimble2008, Riedmatten04, Lee13, Sum16, Valivarthi16, Ren17, Lee19, Valivarthi20, Llewellyn20, Chou18, Wan19, Lee20}. Numerous works have been focused on optimizing teleportation performance in the presence of noise and imperfections arising in realistic implementations \cite{Nielsen96, Horodecki1998, Banaszek00, Karlsson98, JLee00, Mor99, WSon00, Bowen2001, Li00, Oh2002, Agrawal2002, Verstraete2003, kimin12, Lee2021, Im20, Brauer2024}. 
In most existing studies, the degradation of performance is primarily attributed to imperfections and noise in the shared entangled channel, while the entangled joint measurement performed by the sender is frequently assumed to be an ideal projection onto the maximally entangled basis, i.e.,~an ideal Bell-state measurement (BSM).
However, practical implementations of teleportation inevitably deviate from the ideal BSM. In realistic quantum systems, joint measurements are realized using finite-depth circuits composed of imperfect entangling gates, and are thus subject to coherent errors \cite{Knill2005, Greenbaum2017, Bravyi2018, Venn2023}, such as unwanted ZZ interactions in superconducting qubits \cite{Ku2020, Ni2022, Eckstein2024, Sundaresan2020} or residual XX-type couplings in neutral-atom and ion-trap systems \cite{Morgado2021, Fang2022}. Unlike stochastic noise including decoherence, these coherent errors deform the measurement basis itself, effectively transforming an ideal BSM into a partially entangled one. As a result, such non-ideal implementations of BSMs can crucially degrade teleportation performance. This reveals that teleportation performance is determined not only by the entanglement of the quantum channel but also by the entanglement of the joint measurement itself.

Recent works have highlighted that entangled joint measurements are not merely passive ingredients but active quantum resources with nontrivial structure \cite{Gisin2019, Cavalcanti2023, Baumer2021, DelSanto2024, Pauwels2025}. Various families of non-maximally entangled joint measurements, including elegant joint measurements (EJM) \cite{Tavakoli2021}, have been investigated in diverse contexts such as quantum nonlocality \cite{Tavakoli2021, Tavakoli2022}, entanglement swapping \cite{Huang2022}, and quantum teleportation \cite{Ding2024}. Despite this growing interest, the role of measurement entanglement in quantum teleportation remains largely unexplored. In particular, it is still unclear how a reduction in the entanglement of the joint measurement quantitatively affects teleportation performance, and whether such degradation is fundamentally irreversible.
From this perspective, investigating teleportation performance under imperfectly entangled joint measurements is not only of fundamental theoretical interest but also of direct relevance to practical implementations required in most current quantum technologies.

In this work, we investigate how the entanglement of joint measurements determines teleportation performance and propose a strategy to overcome the limitations imposed by partially entangled joint measurements and recover the unit teleportation fidelity with finite probability. We first show that, within the standard quantum teleportation protocol, any reduction in measurement entanglement leads to an unavoidable decrease in teleportation fidelity, even when the shared quantum channel is perfectly entangled. This result highlights not only the impact of non-maximally entangled joint measurements but also an intrinsic limitation of the conventional deterministic teleportation framework. We then demonstrate that this limitation can be overcome by adopting a measurement-reversal (MR) framework \cite{Lee2021}, in which the receiver performs an optimized probabilistic reversing operation conditioned on the measurement outcome. Remarkably, for pure-state teleportation, we show that unit-fidelity teleportation can always be recovered regardless of how weakly entangled the joint measurement is, provided that the measurement imperfection is known.

Moreover, in the two-qubit case, we derive an equation that quantitatively characterizes how the success probability of achieving unit-fidelity teleportation depends on both the entanglement of the joint measurement and the entanglement of the quantum channel. For higher-dimensional systems, we derive general upper and lower bounds that quantify how entanglement of the joint measurement constrains the success probability. This constitutes a central result of our work and reveals a clear trade-off between measurement entanglement, channel entanglement, and the probability of faithful teleportation.
We then apply our analytical framework to two representative examples. (i) First, we analyze EJMs, explicitly comparing the degradation of standard teleportation fidelity with the heralded performance achievable within the MR framework. (ii) We also consider realistic coherent error models such as ZZ interactions during time-continuous realizations of BSMs in superconducting qubits and residual XX-type couplings caused in neutral-atom and ion-trap systems. These illustrate how imperfect BSM implementations manifest as a reduction of measurement entanglement and how their effects can be mitigated through optimized reversal strategies.
We believe that our work provides fundamental insight into the role of measurement entanglement in quantum teleportation, as well as a practical tool for achieving faithful teleportation without requiring substantial or complex modifications to the underlying hardware. It thus offers a practical framework for addressing coherent errors in large-scale quantum processors and quantum networks, where quantum teleportation serves as a fundamental building block.

\section{Effect of coherent error in quantum teleportation} 
\label{sec:effect of coherent error}

\begin{figure}[t]
\centering
\includegraphics[width=8.5cm]{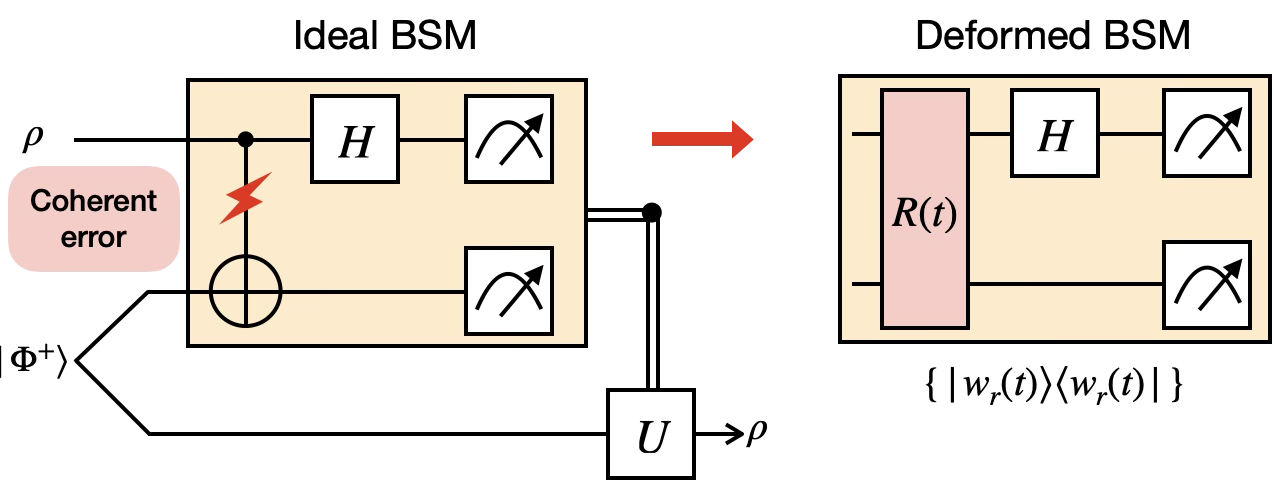}
\caption{Coherent error in a Bell-state measurement (BSM). In the ideal circuit, the BSM is implemented using a CNOT gate followed by a Hadamard gate and projective measurements. A coherent error in the entangling gate deforms the ideal Bell basis into a rotated, partially entangled measurement basis parameterized by the error strength $t$.}
\label{fig:deformed BSM}
\end{figure}

The quantum teleportation protocol originally proposed in Ref.~\cite{Bennett1993} (hereafter referred to as the \textit{standard} protocol) enables the deterministic transfer of an unknown quantum state with unit-fidelity under ideal conditions. This protocol relies on three key ingredients. First, Alice (sender) and Bob (receiver) share an entangled quantum channel, which is ideally a maximally entangled state. Second, Alice performs a maximally entangled joint measurement, i.e.,~an ideal Bell-state measurement (BSM) on the input state and her share of the quantum channel. Third, Alice communicates the measurement outcome to Bob via a classical channel, allowing Bob to apply an appropriate unitary correction and recover the original input state. A typical circuit model of the standard teleportation is illustrated in Fig.~\ref{fig:deformed BSM}.

In realistic implementations, however, the joint measurement as well as the shared entangled channel can deviate from maximal entanglement due to imperfections. Conventionally, teleportation performance has been evaluated under the assumption that Alice performs an ideal BSM, while the effects of imperfect and noisy quantum channels have been extensively studied. In practice, a BSM is not a primitive operation but is realized through a finite-depth circuit composed of entangling gates and local rotations.
Consequently, imperfections in the underlying entangling operations can induce coherent errors that deform the measurement basis away from the ideal Bell basis as shown in Fig.~\ref{fig:deformed BSM}. 
Such a deformed measurement basis can be written as $\ket{w_i(\theta)}=U(\theta)\ket{\beta_i}$, where $\ket{\beta_i}$ denotes the ideal Bell basis for $i=0,1,2,3$. Unlike stochastic noise such as decoherence, which primarily randomizes outcome statistics, coherent errors modify the entanglement structure of the joint measurement itself and thereby directly degrade teleportation performance.

\begin{figure}[t]
\centering
\includegraphics[width=8.5cm]{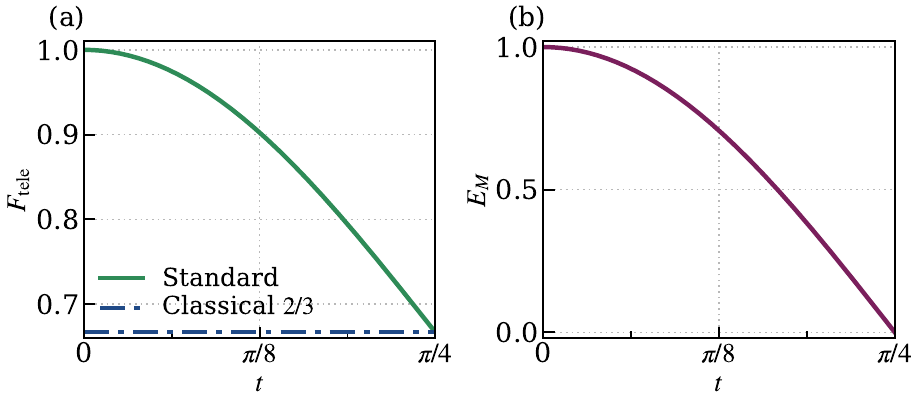}
\caption{Effect of coherent error in BSM on standard teleportation performance. (a) Average teleportation fidelity $F_\text{tele}$ decreases as imperfect entangling time $t$ increases. The horizontal line marks the classical limit $2/3$. (b) The measurement entanglement $E_M$, quantified by concurrence, also decreases monotonically, showing the fidelity loss directly follows from the reduction of measurement entanglement.}
\label{fig:coherent teleportation}
\end{figure}

To illustrate the effect of coherent errors in the simplest example, let us consider a rotated time-dependent Bell basis,
\begin{equation}
\label{eq:deformedBB}
    \begin{aligned}
        \ket{w_0(\theta)} & = \sin \theta\ket{00}+i\cos \theta\ket{11} \\
        \ket{w_1(\theta)} & = \sin \theta\ket{01}+i\cos \theta\ket{10}\\
        \ket{w_2(\theta)} & = \cos \theta\ket{00}-i\sin \theta\ket{11}\\
        \ket{w_3(\theta)} & = \cos \theta\ket{01}-i\sin \theta\ket{10},
    \end{aligned}
\end{equation}
with $\theta = \pi/4-t$. This captures a class of imperfect BSMs arising from coherent rotation of the measurement basis, which naturally occur in realistic systems e.g.,~due to residual $XX$-type Ising interactions during entangling operations (a more detailed analysis will be presented in Sec.~\ref{sec:applications_realistic}). In this model, an ideal entangling operation is implemented at $\theta =\pi/4$, while coherent errors lead to an imperfect entangling evolution of the form $R_{XX}(t) = e^{-i(\pi/4-t)X_1X_2}$. Here $0\le t\le \pi/4$ quantifies the imperfect entangling time. Thus, increasing $t$ from $t=0$ corresponding to the ideal BSM continuously reduces the entanglement of the deformed Bell basis. This reduction can be quantified by the concurrence $E_M(t) = \cos 2t$, which decreases monotonically with $t$ as shown in Fig.~\ref{fig:coherent teleportation}(b).

Assuming teleportation is performed using the deformed Bell basis in Eq.~\eqref{eq:deformedBB} through a maximally entangled quantum channel, the average teleportation fidelity in the standard protocol is given by
\begin{equation}
    \text{(standard)}:~~F_\text{tele}(t) = \frac{2+\cos 2t}{3}= \frac{2+E_M(t)}{3},
\end{equation}
which decreases monotonically with $t$ as shown in Fig. \ref{fig:coherent teleportation}(a). The fidelity is unity at $t=0$ and decreases to the classical limit $2/3$ at $t=\pi/4$. This result clearly demonstrates that coherent errors in BSM can significantly degrade teleportation performance, even when the quantum channel remains perfectly maximally entangled within the standard teleportation protocol. 

\section{Teleportation protocol in measurement-reversal framework} 
\label{sec:framework}

Let us now introduce a generalized teleportation protocol within the measurement-reversal (MR) framework~\cite{Lee2021, Im20}, which will serve as the primary framework throughout this work. Within this framework, we allow for arbitrary joint measurements performed at Alice’s side, rather than restricting to ideal BSM. Unlike the standard protocol, where Bob’s correction is restricted to unitary operations, the MR framework allows a more general probabilistic reversing operation at Bob’s side, enabling recovery of the original input state. A schematic of the teleportation protocol in the MR framework is illustrated in Fig.~\ref{fig:framework}. We consider a scenario in which Alice aims to teleport an unknown input state $\rho_{\bar{a}} = \ket{\phi}\bra{\phi}$ through a shared quantum channel $\Phi_{ab} = \ket{\Phi}\bra{\Phi}$, where $\ket{\Phi} = \sum_{i,j=0}^{d-1}E_{ij}\ket{i}\ket{j}$. Here, $E = [E_{ij}] \in \mathbb{C}^{d \times d}$ denotes the coefficient matrix of the bipartite state $\ket{\Phi}$, satisfying the normalization condition $\mathrm{Tr}(E^\dagger E) = 1$. Assume that Alice performs an arbitrary rank-one joint measurement $\{\ket{w_r}\bra{w_r}\}_{r=0}^{d^2-1}$ on the system $\bar{a}a$. We denote by $W_r\in\mathbb{C}^{d\times d}$ the coefficient matrix of $\ket{w_r}$, satisfying the condition $\mathrm{Tr}(W_r^\dagger W_s)=\delta_{rs}$.
Here, we allow both the shared quantum channel and the joint measurement to be partially entangled. 

Conditioned on the measurement outcome $r$, the (unnormalized) post-measurement state at Bob’s side is given by $\tilde{\rho}_b^r = \ket{\tilde{\phi}_r}\bra{\tilde{\phi}_r}$,  where
\begin{equation}
\label{eq:overall karus operator}
    \ket{\tilde{\phi}_r} = (\bra{w_r}_{\bar{a}a}\otimes I_b)(\ket{\phi}_{\bar{a}}\otimes \ket{\Phi}_{ab}) = E^TW_r^\dagger\ket{\phi}_b,
\end{equation}
captures the combined effect of the shared quantum channel and the joint measurement on Bob’s system (see Appendix~\ref{A:post-measurement} for the detailed calculation).

This naturally identifies an outcome-dependent Kraus operator acting on the input state, given by $M_r \equiv E^TW_r^\dagger$. Accordingly, the overall process—including the local joint measurement, the shared nonlocal channel, and classical communication—can be described by a quantum instrument $\mathbf{M}_{\bar{a}\to b}=\{M_r\}$ satisfying the completeness relation $\sum_r M_r^\dagger M_r = \mathds{1}$, as illustrated in Fig.~\ref{fig:framework}. 
In the ideal setting where both the channel and the joint measurement are maximally entangled, $E$ and $W_r$ are proportional to unitary operators, and the map reduces (up to known local unitaries) to the standard teleportation. In contrast, deviations from maximal entanglement manifest as nonunitary distortions in $M_r$.
To complete the teleportation protocol, Bob applies an outcome-dependent reversing operation $R_r$ to recover the input state $\rho_{\bar{a}}$. Since $M_r$ is generally nonunitary, we allow a probabilistic recovery for each outcome $r$, implemented by a local filter operation $R_r$ heralded by a successful outcome. 

\begin{figure}[t]
\centering
\includegraphics[width=8.0cm]{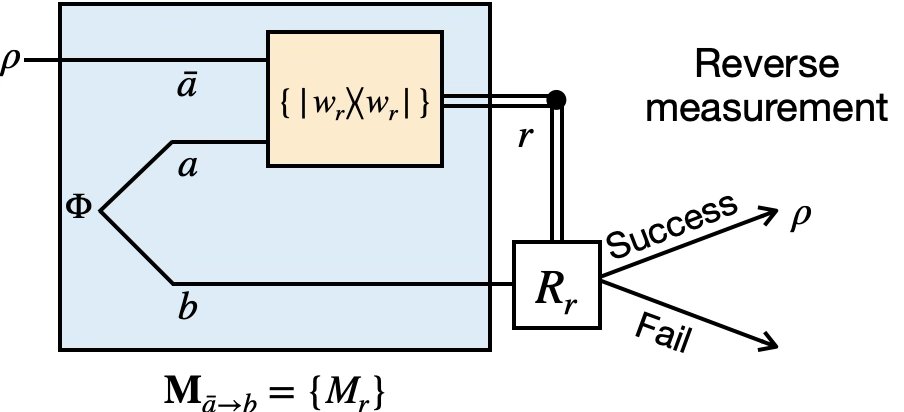}
\caption{Teleportation protocol in the measurement-reversal (MR) framework. Alice holds modes $\bar{a}$ and $a$, while Bob holds mode $b$. Alice performs a rank-one joint measurement on $\bar{a}a$ and sends the outcome $r$ to Bob via a classical channel. Bob then applies a reversing operation $R_r$ conditioned on $r$ to recover the input state $\rho_{\bar{a}}$ on mode $b$.}
\label{fig:framework}
\end{figure}

We quantify the performance of the protocol in terms of the heralded teleportation fidelity and the corresponding success probability. The teleportation fidelity is defined as the average overlap between the input and the output state conditioned on the successful recovery, i.e.,
\begin{equation}
\text{(optimal)}:~~F_{\text{tele}}= 
    \int d\phi\sum_rp_\phi^r\braket{\phi|\,\rho^{r}_\text{succ}|\phi}
\end{equation}
where $p_\phi^r$ denotes the probability of outcome $r$ for the input $\ket{\phi}$ and $\rho^{r}_\text{succ}$ is the corresponding output state upon the success. The corresponding average success probability is given by 
\begin{equation}
    P_\text{succ} = \int d\phi \sum_r p_{\text{succ},\phi}^r,
\end{equation}
where $p_{\text{succ},\phi}^r$ is the success probability of the reversing operation $R_r$ with outcome $r$ applied on the input $\ket{\phi}$. 

Our protocol aims to achieve faithful teleportation i.e.~unit teleportation fidelity, while maximizing the success probability $P_{\text{succ}}$. The performance of this protocol is fundamentally subject to the no-cloning constraint.
By the no-cloning theorem, information about the input state can either be successfully transferred to Bob--characterized by the teleportation fidelity and success probability--or leaked to Alice~\cite{Cheong2012, Bana, Lim14, SLee21, Hong22}. We can quantify this leakage by the estimation fidelity at Alice, denoted by $L_\text{Alice}$, defined as 
\begin{equation}
    L_\text{Alice} = \int d\phi \sum_{r}p_\phi^r|\braket{\hat{\phi}_r|\phi}|^2,
\end{equation}
where $\ket{\hat{\phi}_r}$ is the estimated state conditioned on outcome $r$ based on her knowledge of the quantum instrument $\mathbf{M}_{\bar{a}\to b}$. 
This constraint manifests as a fundamental trade-off between the success probability $P_\text{succ}$ and the leaked information $L_\text{Alice}$ induced by $\mathbf{M}_{\bar{a}\to b}$, given by 
\begin{equation}
d(d+1)L_{\text{Alice}}^{\max}+(d-1)P_{\text{succ}}^{\max} \le 2d,
\end{equation}
where the maximum is evaluated over all possible strategies. This relation sets a fundamental bound on teleportation performance, implying that increased information leakage to Alice necessarily reduces the achievable success probability of faithful teleportation~\cite{Cheong2012, Bana, Lim14, SLee21, Hong22}.

Following Ref. \cite{Lee2021}, we show that teleportation described by the quantum instrument $\mathbf{M}_{\bar{a}\to b}$, arising from a pure entangled channel and a joint measurement, can be optimized to achieve unit-fidelity in the successful events, i.e., $F_\text{tele}=1$, with a maximum success probability $P_\text{succ}^{\max}$ that saturates the fundamental no-cloning bound.
To construct the optimal reversing operator, consider the quantum instrument $\mathbf{M}_{\bar{a}\to b}=\{M_r\}$, and write each Kraus operator in its singular value decomposition as $M_r = P_r\Sigma_r Q_r^\dagger$, where $P_r$ and $Q_r$ are unitary operators and $\Sigma_r=\text{diag}(\sigma_0^r,\ldots,\sigma_{d-1}^r)$ with $\sigma_0^r\ge \cdots \ge\sigma_{d-1}^r$.
For each outcome $r$, the optimal reversing operator is given by $R_r= \sigma_{\min}^r Q_r \Sigma_r^{-1}P_r^\dagger$ such that 
\begin{equation}
\label{eq:optimal reversing}
    R_rM_r\ket{\phi} = \sigma_{\min}^r\ket{\phi},\quad \forall \ket{\phi},
\end{equation}
showing that the input state is perfectly recovered upon successful reversal~\cite{Cheong2012, SLee21}.
The success probability for outcome $r$ is therefore given by $p_{\text{succ},\phi}^r = (\sigma_{\min}^r)^2$, which is independent of the input state $\phi$.
Consequently, an arbitrary input state $\ket{\phi}$ can be faithfully teleported with the maximum success probability $P_\text{succ}^{\max} = \sum_r (\sigma_{\min}^r)^2$, while the corresponding maximum information leakage over all possible strategies is obtained as $L_\text{Alice}^{\max} = (d+\sum_r (\sigma_{\max}^r)^2)/d(d+1)$. 

\section{Role of joint measurement and quantum channel in faithful teleportation}

We analyze how the shared quantum channel and the joint measurement jointly determine the teleportation performance within our framework, using the maximum success probability of faithful teleportation as the operational benchmark for a given pair of resources.

We first consider the qubit ($d=2$) teleportation, where a complete analytical characterization is possible.
Before stating the theorem, we briefly outline its derivation. For each outcome $r$, the effective Kraus operator factorizes as $M_r = E^{T}W_r^\dagger$, and the maximum success probability for that outcome is determined by the smallest singular value of $M_r$. Thus, the problem reduces to evaluating $\sigma_{\min}(M_r)$ in terms of the channel coefficient matrix $E$ and the measurement coefficient matrix $W_r$ (See Appendix \ref{sec: proof of thm1} for details).

\begin{Thm}
\label{thm:success probability}
Assume $d=2$. Let $E_c$ be the concurrence of the shared two-qubit channel $\ket{\Phi}$, and let $E_r$ be the concurrence of the rank-one measurement state $\ket{w_r}$ corresponding to outcome $r$. For the effective Kraus operator $M_r = E^TW_r^\dagger$, the maximum success probability of faithful teleportation for outcome $r$ is 
\begin{equation}
\label{eq:single success probability}
   \frac{1}{4}\left[(1+\bar{E}_c\bar{E}_rx_r)-\sqrt{(1+\bar{E}_c\bar{E}_rx_r)^2-E_c^2E_r^2}\right],
\end{equation}
where $\bar{E}_c = \sqrt{1-E_c^2},~ \bar{E}_r = \sqrt{1-E_r^2}$. 
Here, $x_r=\mathbf{u}\cdot \mathbf{n}_r\in[-1,1]$ quantifies the relative alignment between the channel and measurement bases, where $\mathbf{u}$ and $\mathbf{n}_r$ are the Bloch directions of the reduced single-qubit operators $A:=E^*E^T$ and $B_r:= W_r^\dagger W_r$, respectively.
\end{Thm}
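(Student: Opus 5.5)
We use the fact, established earlier in the paper, that the maximum success probability for outcome $r$ equals $(\sigma_{\min}^r)^2$, the square of the smallest singular value of $M_r=E^TW_r^\dagger$. The task is then to compute the smaller eigenvalue of the $2\times2$ positive matrix $M_r^\dagger M_r$, or equivalently of $M_rM_r^\dagger$. For a $2\times 2$ matrix this eigenvalue is determined by the trace and determinant:
\begin{equation*}
\lambda_{\min}=\tfrac12\Bigl[\mathrm{Tr}(M_r^\dagger M_r)-\sqrt{\mathrm{Tr}(M_r^\dagger M_r)^2-4|\det M_r|^2}\Bigr].
\end{equation*}
So it suffices to express $\mathrm{Tr}(M_r^\dagger M_r)$ and $|\det M_r|$ in terms of $E_c$, $E_r$ and $x_r$.

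The determinant is the easy part. We have $|\det M_r|=|\det E|\,|\det W_r|$. For a two-qubit pure state with coefficient matrix $C$, the concurrence is $2|\det C|$. Hence $|\det E|=E_c/2$ and $|\det W_r|=E_r/2$, which gives $4|\det M_r|^2=E_c^2E_r^2/4$.

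For the trace, cyclicity gives
\begin{equation*}
\mathrm{Tr}(M_r^\dagger M_r)=\mathrm{Tr}(W_rE^*E^TW_r^\dagger)=\mathrm{Tr}(A B_r),
\end{equation*}
with $A=E^*E^T$ and $B_r=W_r^\dagger W_r$. Both are positive with unit trace, since $\mathrm{Tr}(E^\dagger E)=\mathrm{Tr}(W_r^\dagger W_r)=1$. So both are genuine single-qubit density matrices. Write them in Bloch form, $A=\tfrac12(I+|\mathbf a|\,\mathbf u\cdot\boldsymbol\sigma)$ and $B_r=\tfrac12(I+|\mathbf b_r|\,\mathbf n_r\cdot\boldsymbol\sigma)$. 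Then $\mathrm{Tr}(AB_r)=\tfrac12(1+|\mathbf a||\mathbf b_r|\,\mathbf u\cdot\mathbf n_r)$.

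The Bloch vector lengths follow from the determinant. Since $\det\rho=(1-|\mathbf a|^2)/4$ for any qubit density matrix $\rho$ with Bloch vector $\mathbf a$, we get $1-|\mathbf a|^2=4\det A=4|\det E|^2=E_c^2$, so $|\mathbf a|=\bar E_c$. In the same way, $|\mathbf b_r|=\bar E_r$. This yields $\mathrm{Tr}(M_r^\dagger M_r)=\tfrac12(1+\bar E_c\bar E_r x_r)$.

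Substituting into the eigenvalue formula gives
\begin{equation*}
\lambda_{\min}=\tfrac12\Bigl[\tfrac12(1+\bar E_c\bar E_rx_r)-\tfrac12\sqrt{(1+\bar E_c\bar E_rx_r)^2-E_c^2E_r^2}\Bigr],
\end{equation*}
which is exactly Eq.~\eqref{eq:single success probability}.

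The main point needing care is the identification of the Bloch-vector lengths with $\sqrt{1-E^2}$, including keeping track of the conjugates and transposes in $A=E^*E^T$. Note that $\det(E^*E^T)=|\det E|^2$ holds regardless of the ordering, so this goes through. A second subtlety is a degenerate direction: when $\bar E_c=0$ or $\bar E_r=0$, the unit vector $\mathbf u$ or $\mathbf n_r$ is undefined. The formula is still consistent in that case because the $x_r$ term is multiplied by zero. Finally, $x_r\in[-1,1]$ holds trivially because it is the inner product of two unit vectors.
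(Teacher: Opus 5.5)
Your proposal is correct and follows essentially the same route as the paper: you reduce to the smaller eigenvalue of a $2\times 2$ positive matrix via its trace and determinant, use $|\det M_r| = |\det E|\,|\det W_r| = E_cE_r/4$, and write the trace as $\mathrm{Tr}(AB_r)$ in Bloch form. Working with $M_r^\dagger M_r$ instead of the paper's $H_r = M_rM_r^\dagger$ changes nothing, and your derivation of the Bloch radii via $1-|\mathbf a|^2 = 4\det A = E_c^2$ fills in a step the paper only asserts.
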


This result shows that, in the qubit case, the success probability is determined by three ingredients: channel entanglement $E_c$, measurement entanglement $E_r$, and relative alignment $x_r$ between their Bloch directions. In particular, $E_c$ and $E_r$ enter symmetrically through the factor $E_cE_r$, and the alignment dependence appears through the term $\bar{E}_c\bar{E}_rx_r$. Thus, whenever either the channel or the measurement becomes maximally entangled, the alignment dependence disappears.

The theorem also identifies relative alignment as an additional performance-determining factor in the partially entangled regime. For fixed values $E_c$ and $E_r$, different relative orientations between the channel and measurement Bloch directions can lead to different success probabilities. Therefore, in the qubit case, optimizing faithful teleportation requires not only increasing the amount of entanglement in both resources but also choosing their relative basis alignment appropriately.

We now extend the analysis to higher-dimensional systems ($d>2$). In this case, entanglement alone does not fully determine the singular-value spectrum of the effective Kraus operators, and therefore a complete closed-form characterization analogous to the qubit case is generally not available. Nevertheless, one can still derive analytic bounds that capture how measurement entanglement constrains teleportation performance. To isolate the role of the joint measurement, we consider the maximally entangled channel in dimension $d$.

\begin{Thm}
\label{thm:high-dim}
Assume that the shared quantum channel is maximally entangled in dimension $d$. Then the maximum success probability satisfies
    \begin{equation}
        \frac{1}{d}\sum_r t_r \le P_{\text{succ}}^{\max} \le \frac{1}{d^2}\sum_r E_r
    \end{equation}
where $E_r$ denotes the entanglement of $\ket{w_r}$ quantified by the $G$-concurrence \cite{Gour2005}, and $t_r \in [0,1/d]$ is the unique solution of
\begin{equation}
t_r\left(\frac{1-t_r}{d-1}\right)^{d-1} = \left(\frac{E_r}{d}\right)^d.
\end{equation}
\end{Thm}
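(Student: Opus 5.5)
Since the channel is maximally entangled, the bounds reduce to a statement about the Schmidt coefficients of each measurement vector $\ket{w_r}$. We bound $\sigma_{\min}(W_r)^2$ from above and below using only the product (fixed by the $G$-concurrence) and the sum (fixed by normalization) of these coefficients.

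\emph{Reduction.} A maximally entangled channel in dimension $d$ has coefficient matrix $E=U/\sqrt{d}$ with $U$ unitary. Hence $M_r=E^TW_r^\dagger=U^TW_r^\dagger/\sqrt{d}$ has the same singular values as $W_r$, rescaled by $1/\sqrt d$. By the optimal reversing construction in Eq.~\eqref{eq:optimal reversing}, $P_{\text{succ}}^{\max}=\sum_r(\sigma^r_{\min})^2=\frac1d\sum_r\lambda^r_{\min}$. Here $\lambda^r_0,\dots,\lambda^r_{d-1}$ are the squared singular values of $W_r$, i.e.\ the Schmidt coefficients of $\ket{w_r}$. They satisfy $\lambda^r_i\ge 0$ and $\sum_i\lambda^r_i=\mathrm{Tr}(W_r^\dagger W_r)=1$. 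The $G$-concurrence of Ref.~\cite{Gour2005} is $E_r=d\,(\det W_rW_r^\dagger)^{1/d}=d\big(\prod_i\lambda^r_i\big)^{1/d}$, so the constraint is $\prod_i\lambda^r_i=(E_r/d)^d$. It therefore suffices to prove, for each $r$,
\[
t_r\le\lambda^r_{\min}\le E_r/d .
\]

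\emph{Upper bound.} The minimum of the $\lambda^r_i$ is at most their geometric mean, so $\lambda^r_{\min}\le(\prod_i\lambda^r_i)^{1/d}=E_r/d$. Summing over $r$ and dividing by $d$ gives $P_{\text{succ}}^{\max}\le\frac1{d^2}\sum_rE_r$.

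\emph{Lower bound.} Set $t=\lambda^r_{\min}$; since $t$ is the smallest of $d$ numbers summing to $1$, we have $t\le 1/d$. The remaining $d-1$ coefficients sum to $1-t$, so by AM--GM their product is at most $\big(\frac{1-t}{d-1}\big)^{d-1}$. This gives
\[
\Big(\frac{E_r}{d}\Big)^d=\prod_i\lambda^r_i\le f(t),\qquad f(t):=t\Big(\frac{1-t}{d-1}\Big)^{d-1}.
\]
On $[0,1/d]$ the function $f$ is continuous and strictly increasing: $\frac{d}{dt}\log f=\frac1t-\frac{d-1}{1-t}>0$ for $t<1/d$. Moreover $f(0)=0$ and $f(1/d)=d^{-d}\ge(E_r/d)^d$ because $E_r\le1$. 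So the equation $f(t_r)=(E_r/d)^d$ has a unique solution $t_r\in[0,1/d]$, and $f(\lambda^r_{\min})\ge f(t_r)$ together with monotonicity gives $\lambda^r_{\min}\ge t_r$. Summing over $r$ gives $P_{\text{succ}}^{\max}\ge\frac1d\sum_rt_r$.

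The argument rests on two points that need care. First, one must identify the $G$-concurrence with the geometric mean of the Schmidt coefficients, using the normalization $\mathrm{Tr}(W_r^\dagger W_r)=1$. Second, the lower bound requires monotonicity of $f$ only on $[0,1/d]$, and this is why the restriction $\lambda^r_{\min}\le 1/d$ must be used before inverting $f$.
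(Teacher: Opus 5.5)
Your proof is correct and follows the paper's argument essentially step for step. You reduce to the squared singular values of $W_r$, use the minimum-versus-geometric-mean bound for the upper estimate, and use AM--GM on the remaining $d-1$ coefficients together with the monotonicity of $t\left(\frac{1-t}{d-1}\right)^{d-1}$ on $[0,1/d]$ for the lower estimate. You also make explicit two points the paper leaves implicit: that $\lambda^r_{\min}\le 1/d$ before inverting, and that $t_r$ exists and is unique because $f(0)=0$ and $f(1/d)=d^{-d}\ge (E_r/d)^d$.
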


The upper bound is saturated when the joint measurement forms a maximally entangled basis.
The lower bound is achieved for singular values of $W_r$ of the form
\begin{equation} 
        \lambda_0 =\cdots = \lambda_{d-2} = \sqrt{\frac{1-t_r}{d-1}},\quad \lambda_{d-1}=\sqrt{t_r}\ .
    \end{equation}
Importantly, both bounds increase monotonically with the entanglement of the joint measurement, demonstrating that enhanced measurement entanglement systematically improves teleportation performance even in higher dimensions.
Detailed derivations and saturation conditions are provided in Appendix~\ref{sec: high dim}.

\section{Applications} 
\label{sec:applications}

In this section, we apply our framework to several representative examples to investigate how the entanglement of joint measurements determines teleportation performance and how the protocol can be optimized. We first consider the elegant joint measurement, which serves as an illustrative theoretical example highlighting the roles of both the joint measurement and the quantum channel in quantum teleportation. We then turn to coherent-error models in realistic implementations of BSM, where our framework quantitatively captures how imperfect entangling operations deform the measurement basis and how their effects can be mitigated through optimized reversal strategies.

\subsection{Elegant joint measurement}

We consider the elegant joint measurement (EJM)~\cite{Tavakoli2021} as Alice's joint measurement in the teleportation protocol. It is defined by a set of four rank-one projectors $\{\ket{w_r(t)}\bra{w_r(t)}\}_{r=0}^3$, parameterized by $0\le t\le \pi/2$, given by 
\begin{equation}
    \begin{aligned}
    \ket{w_0(t)} & = \frac{1}{2}\left(e^{-\frac{\pi i}{4}}\ket{00}+p^{-}(t)\ket{01}+ p^+(t)\ket{10}+e^{-\frac{3\pi i}{4}}\ket{11}\right),\\
    \ket{w_1(t)} & = \frac{1}{2}\left(e^{\frac{3\pi i}{4}}\ket{00}+p^{-}(t)\ket{01}+ p^+(t)\ket{10}+e^{\frac{\pi i}{4}}\ket{11}\right),\\
    \ket{w_2(t)} & = \frac{1}{2}\left(e^{\frac{\pi i}{4}}\ket{00}-p^{+}(t)\ket{01}- p^-(t)\ket{10}+e^{\frac{3\pi i}{4}}\ket{11}\right),\\
    \ket{w_3(t)} & = \frac{1}{2}\left(e^{-\frac{3\pi i}{4}}\ket{00}-p^{+}(t)\ket{01}- p^-(t)\ket{10}+e^{-\frac{\pi i}{4}}\ket{11}\right),
    \end{aligned}
\end{equation}
where $p^{\pm}(t) = (1\pm e^{-it})/\sqrt{2}$. 
The EJM is an iso-entangled measurement so that all four basis states possess the same amount of entanglement, quantified by $E_M(t) = \sqrt{1-(3/4)\cos^2t}$ independent of the outcome $r$. The geometrical structure of the EJM is captured by the Bloch directions $\{\mathbf{n}_r\}$ of the reduced operators $W_r^\dagger W_r$. The corresponding Bloch vectors have common radius $\sqrt{1-E_M^2}=(\sqrt{3}/2)\cos t$ and form a regular tetrahedron in the Bloch sphere, as illustrated in Fig.~\ref{fig:EJM bloch sphere}. In particular, at $t=0$ the tetrahedron attains its maximal radius $\sqrt{3}/2$, while at $t=\pi/2$ it collapses to the origin and the EJM becomes locally equivalent to BSM.

For fixed measurement entanglement $E_M$ and channel entanglement $E_c$, Theorem 1 reduces the optimization of the teleportation success probability to an optimization over the channel Bloch direction $\mathbf{u}$. For the EJM, the optimal alignment is achieved when the channel Bloch direction is antiparallel to one of the measurement Bloch directions, i.e.,~$\mathbf{u} = -\mathbf{n}_r$, as illustrated in Fig.~\ref{fig:EJM bloch sphere} (b). 
The optimization is performed to maximize the total success probability averaged over all outcomes, rather than for a specific outcome. Due to the symmetry of the EJM, any choice of channel Bloch direction antiparallel to one of the measurement directions yields the same success probability.
This provides a concrete geometric realization of the fact that, in the partially entangled regime, teleportation performance depends not only on the amount of entanglement but also on the relative alignment between the channel and the measurement bases. To make this explicit, we consider a family of quantum channels parametrized by $0\le s\le \pi/2$, 
\begin{equation}
    \ket{\Phi(s)}_{ab}  = \frac{1}{2}\left(e^{-\frac{\pi i}{4}}\ket{00}+p^{-}(s)\ket{01}+ p^+(s)\ket{10}+e^{-\frac{3\pi i}{4}}\ket{11}\right),
\end{equation}
whose reduced operator has Bloch direction $\mathbf{u}=-\mathbf{n}_0$ and concurrence $E_c(s) = \sqrt{1-(3/4)\cos^2 s}$. Applying Theorem 1, the optimal success probability is given by
\begin{equation}
    P_{\text{succ}}^{\max} = 1-\frac{1}{4}\left[\sqrt{(1-X)^2-(E_cE_M)^2}+\sqrt{(3+X)^2-9(E_cE_M)^2}\right],
\end{equation}
where $X:=\sqrt{(1-E_M^2)(1-E_c^2)}$. As shown in Fig.~\ref{fig:EJM noisy channel}, 
the success probability increases monotonically with both $E_c$ and $E_M$, approaching unity where both the channel and the measurement become maximally entangled.

\begin{figure}[t]
\centering
\includegraphics[width=8.5cm]{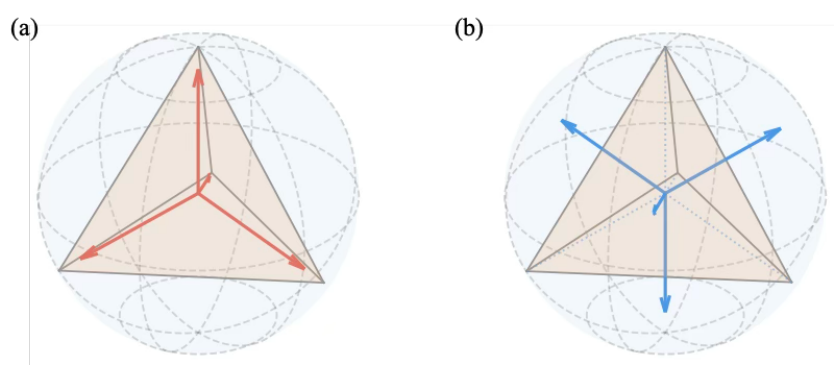}
\caption{Geometric structure of the elegant joint measurement (EJM) in the reduced Bloch-sphere representation. (a) The four Bloch directions $\{\mathbf n_r\}_{r=0}^3$ form a regular tetrahedron and the corresponding reduced Bloch vectors have common radius $\sqrt{1-E_M^2}$. (b) Optimal channel Bloch directions for fixed $E_c$ and $E_M$. The success probability of faithful teleportation is maximized when the channel Bloch direction is antiparallel to one of the measurement directions, $\mathbf u=-\mathbf n_r$, with Bloch-vector radius $\sqrt{1-E_c^2}$.}
\label{fig:EJM bloch sphere}
\end{figure}

\begin{figure}
\centering
\includegraphics[width=8.5cm]{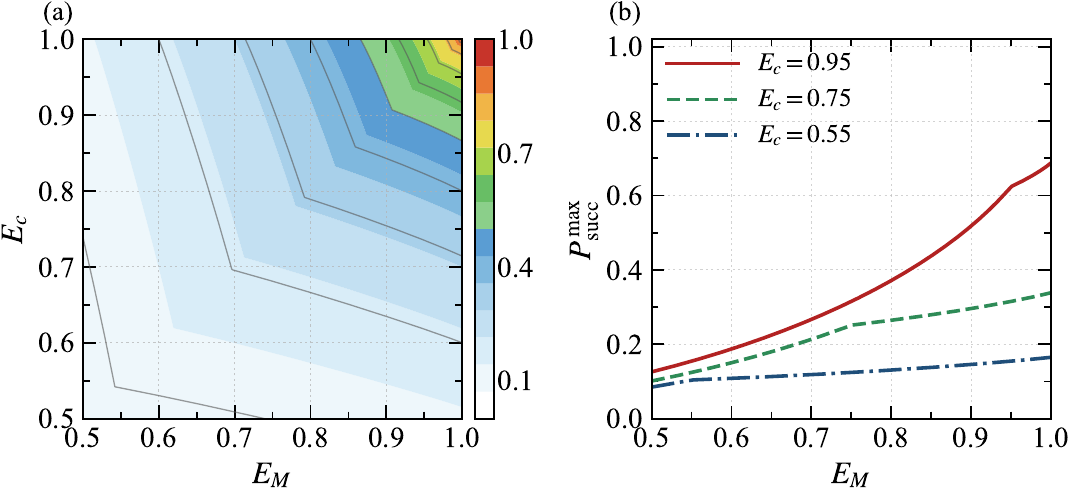}
\caption{Maximum success probability $P_\text{succ}^{\max}$ for faithful teleportation with the elegant joint measurement under the optimal channel-measurement alignment. (a) Contour plot of $P_\text{succ}^{\max}$ versus the measurement entanglement $E_M$ and the channel entanglement $E_c$. The symmetry of the contours under $E_M$ and $E_c$ indicates that, for the EJM, the two entanglement resources have an equal impact on the optimized success probability. (b) $P_\text{succ}^{\max}$ versus $E_M$ for several fixed values of $E_c$.}
\label{fig:EJM noisy channel}
\end{figure}

\begin{figure}[t]
\centering
\includegraphics[width=8.5cm]{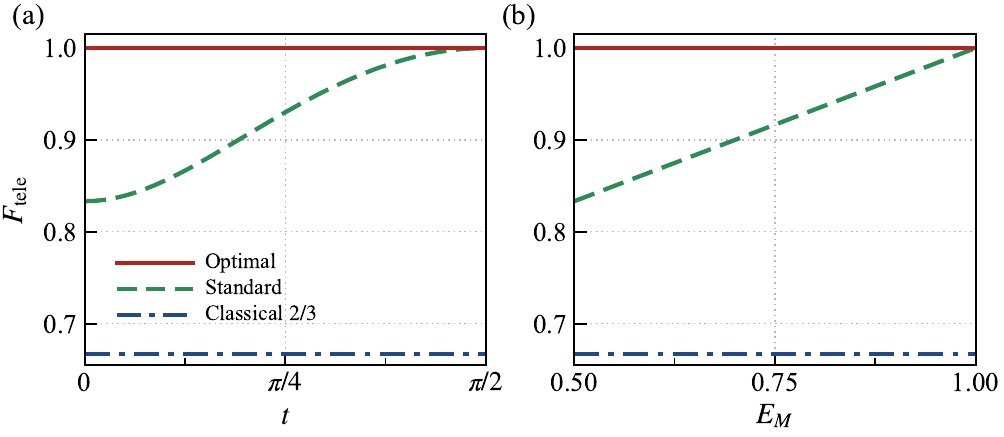}
\caption{Teleportation fidelity for the elegant joint measurement with a maximally entangled channel. 
(a) Fidelity $F_\text{tele}$ versus the EJM parameter $t$. In the standard protocol, the average fidelity increases from $5/6$ at $t=0$ to unity at the BSM limit $t=\pi/2$, whereas the optimized MR protocol achieves $F_\text{tele}=1$ for all $t$. The horizontal line marks the classical limit $2/3$. (b) The same data plotted versus the measurement entanglement $E_M$.}
\label{fig:EJM standard fidelity}
\end{figure}

\begin{figure}[t]
\centering
\includegraphics[width=8.5cm]{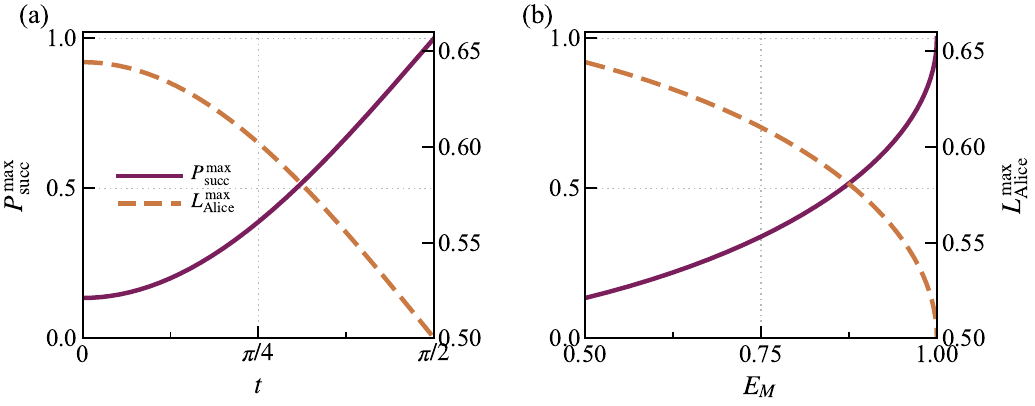}
\caption{Trade-off between the success probability and information leakage for the elegant joint measurement with maximally entangled channel. (a) Maximum success probability $P_\text{succ}^{\max}$ and maximum information leakage to Alice $L_\text{Alice}^{\max}$ versus the EJM parameter $t$. (b) The same quantities plotted versus the measurement entanglement $E_M$. The two quantities satisfy $6L_\text{Alice}^{\max}+P_\text{succ}^{\max}=4$, demonstrating saturation of the no-cloning bound in the MR framework.}
\label{fig:EJM trade off}
\end{figure}


We next restrict our attention to a maximally entangled channel in order to isolate the effect of the joint measurement. For outcome $r=0$, the Kraus operator of the overall quantum instrument $\mathbf{M}_{\bar{a}\to b}$ is given by
\begin{equation}
    M_0(t)_ = \frac{1}{2\sqrt{2}}\begin{pmatrix} e^{\pi i/4} & (p^+(t))^* \\ (p^-(t))^* & e^{3\pi i/4}\end{pmatrix} \ .
\end{equation}
The remaining Kraus operators $M_r(t)$ are related by single-qubit Pauli conjugations,
\begin{equation}
    M_1 = -\sigma_zM_0\sigma_z,\quad  M_2  = -\sigma_xM_0\sigma_x,\quad M_3  = \sigma_yM_0\sigma_y.
\end{equation}
so that all $M_r(t)$ have the same singular values 
\begin{equation}
    \lambda_0(t) = \frac{\sqrt{2+\sqrt{3}\cos{t}}}{2\sqrt 2},\quad \lambda_1(t) = \frac{\sqrt{2-\sqrt{3}\cos{t}}}{2\sqrt 2}.
\end{equation}

We can then optimize the reversing operation performed by Bob within the MR framework, as given in Eq.~\eqref{eq:optimal reversing}. For outcome $r=0$, the optimal reversing operator has the form
\begin{equation}
    R_{0}(t)=\kappa(t)\begin{pmatrix}e^{-\pi i/4} & (p^+(t))^*\\ (p^-(t))^* & e^{-3\pi i/4}\end{pmatrix},\quad \kappa(t) = \frac{4\sqrt{2}\lambda_1(t)}{3-e^{2it}}
\end{equation}
The operators for the remaining outcomes are related by Pauli conjugations,
\begin{equation}
    R_1 = -\sigma_zR_0\sigma_z,\quad  R_2  = -\sigma_xR_0\sigma_x ,\quad R_3  = \sigma_yR_0\sigma_y.
\end{equation}
Notably, an arbitrary unknown input state can be faithfully recovered for all outcomes $r$ upon successful reversal, for all values of $t$, as
\begin{equation}
    R_r(t)M_r(t)\ket{\phi} = \lambda_{1}(t)\ket{\phi},\quad \forall \ket{\phi}.
\end{equation}
This is in contrast to the standard teleportation protocol, where the reversing operation is restricted to unitary corrections. In Fig.~\ref{fig:EJM standard fidelity}(a) and (b), we compare the maximum attainable teleportation fidelities of the optimized protocol and the standard protocol as functions of $t$ and the joint measurement entanglement $E_M$, respectively. We observe that, in the standard protocol, the maximum teleportation fidelity is approximately 0.83 at $t=0$ and increases monotonically with the entanglement of the joint measurement, reaching unity only when $E_M=1$. More precisely, we can calculate the average teleportation fidelity of standard protocol as 
$F_\text{tele}(t)=2/3+\sqrt{4-3\cos^2 t}/6$.
In contrast, the optimized protocol within our framework achieves unit teleportation fidelity, $F_{\text{tele}} = 1$ for all values of $t$, irrespective of the measurement entanglement.

This improvement is achieved at the expense of the success probability. The maximum success probability of the optimized protocol is given by $P_\text{succ}^{\max}(t) = 1-(\sqrt{3}/2)\cos t$. As analyzed in Sec.~\ref{sec:framework}, the corresponding information leakage to Alice is $L_\text{Alice}^{\max} (t) = 1/2+(\sqrt{3}/12)\cos t$. Figure~\ref{fig:EJM trade off}(a) and (b) show the dependence of the success probability and the information leakage on $t$ and the joint measurement entanglement $E_M$, respectively. This clearly demonstrates a trade-off between the success probability and the information leakage, reflecting the fundamental no-cloning bound, while maintaining unit teleportation fidelity.

As a result, using the EJM, we can explicitly observe how teleportation performance depends on the entanglement of the joint measurement. While the standard protocol remains deterministic, it suffers from fidelity loss when the measurement entanglement is reduced, whereas the optimized protocol within our framework preserves unit teleportation fidelity at the cost of a reduced (heralded) success probability.

\subsection{Coherent-error models in realistic BSM implementation}
\label{sec:applications_realistic}

As a realistic application of our framework, we first consider the implementation of BSM in superconducting qubits, where two-qubit entangling gates are commonly realized using the cross-resonance (CR) gate \cite{Paraoanu2006, Rigetti2010, Chow2011}. In the CR platform, the effective two-qubit Hamiltonian generally contains several Pauli terms, but after standard echo calibration the desired entangling interaction is dominantly of $ZX$ type, yielding a CNOT-equivalent operation up to local rotations \cite{Magesan2020, Sundaresan2020}.
In practice, however, residual coherent errors remain during gate execution, most often static or drive-induced $ZZ$ couplings. We therefore model an imperfect BSM as a $ZX$-based CNOT implementation perturbed by a coherent $ZZ$ error during the entangling evolution \cite{Sundaresan2020, Tripathi2019}.

The ideal entangling operation can be written by $ZX_{\pi/2} = e^{-i(\pi/4)Z_{\bar{a}}X_{a}}$, where $\bar{a}$ and $a$ denote the control and target qubits, respectively. In the presence of a coherent $ZZ$ error, the entangling operation becomes
\begin{equation}
    U_\text{ent}(t) = \exp \left[-i\left(\frac{\pi}{4}Z_{\bar{a}}X_{a}+tZ_{\bar{a}}Z_{a}\right)\right],
\end{equation}
where $0\le t<\sqrt{3}\pi/4$ quantifies the coherent error strength and $t=0$ corresponds to the ideal case which is equivalent to CNOT-gate up to local unitary operations. We model imperfect BSM using imperfect CNOT-gate $U_\text{CNOT}(t) = U_\text{loc}U_\text{ent}(t)$.
The resulting imperfect BSM can be modeled by a deformed Bell basis $\{\ket{w_r(t)}\}_{r=0}^3$
\begin{equation}
    \begin{aligned}
    \ket{w_0(t)}  &= \frac{1}{2}\Bigr(\alpha(t)\ket{00}+\beta(t)\ket{01}-\beta(t)\ket{10}+\alpha(t)\ket{11}\Bigl),\\ 
    \ket{w_1(t)}& = \frac{1}{2}\Bigr(-\beta(t)^*\ket{00}+\alpha(t)^*\ket{01}+\alpha(t)^*\ket{10}+\beta(t)^*\ket{11}\Bigl),\\
    \ket{w_2(t)} & = \frac{1}{2}\Bigr(\alpha(t)\ket{00}+\beta(t)\ket{01}+\beta(t)\ket{10}-\alpha(t)\ket{11}\Bigr),\\
    \ket{w_3(t)}& = \frac{1}{2}\Bigr(-\beta(t)^*\ket{00}+\alpha(t)^*\ket{01}-\alpha(t)^*\ket{10}-\beta(t)^*\ket{11}\Bigr), 
    \end{aligned}
\end{equation}
where we denote $R(t):=\sqrt{\pi^2+16t^2}/4$, $\alpha(t)= \frac{\pi}{4R}\sin R+\cos R+i \frac{t}{R}\sin R$, and $\beta(t) = i\left(\frac{\pi}{4R}\sin R-\cos R\right)-\frac{t}{R}\sin R$. Within this interval, the coherent $ZZ$ error continuously reduces the entanglement of the implemented joint measurement. The concurrence of each measurement state is given by $E_M(t) = |\sin2R(t)|$, which decreases monotonically from $1$ at $t=0$ to $0$ as $t\to \sqrt{3}\pi/4$.

We first evaluate how the reduction of entanglement in the joint measurement, induced by coherent errors, affects the teleportation fidelity within the standard protocol. To isolate the effect of measurement imperfections, we consider a maximally entangled quantum channel. The average teleportation fidelity is then obtained as $F_\text{tele} (t) = (2+E_M(t))/3$, which attains unity only at $t=0$ and decreases monotonically as the coherent error increases. This demonstrates that, in realistic superconducting-gate implementations, coherent errors in the entangling operation directly degrade the performance of the standard deterministic teleportation protocol through the reduction of measurement entanglement.

We next apply our framework to optimize the teleportation protocol under the effect of the same imperfect BSM.
Since the underlying $ZX$-based implementation naturally selects $\sigma_y$ basis, we consider the partially entangled quantum channel $\ket{\Phi(\varphi)} = \cos \varphi \ket{0_y0_y}+\sin\varphi\ket{1_y1_y}$,
for $0\le \varphi\le \pi/4$, where $\ket{0_y} = (\ket{0}+i\ket{1})/\sqrt{2}$ and $\ket{1_y} = (\ket{0}-i\ket{1})/\sqrt{2}$. The amount of entanglement of this state is given by $E_c(\varphi) = \sin 2\varphi$. 

The Kraus operator of the overall quantum instrument is written by
\begin{equation}
    \begin{aligned}
        M_0 & = \gamma(t)\cos\varphi\sin R(t)\ket{0_y}\bra{1_y}+\sin\varphi\cos R(t)\ket{1_y}\bra{0_y},\\
        M_1 & = \cos\varphi\cos R(t)\ket{0_y}\bra{0_y}-\gamma(t)^*\sin\varphi\sin R(t)\ket{1_y}\bra{1_y},\\
        M_2 & = \gamma(t)\cos\varphi\sin R(t)\ket{0_y}\bra{0_y}+\sin\varphi\cos R(t)\ket{1_y}\bra{1_y},\\
        M_3 & = \cos\varphi\cos R(t)\ket{0_y}\bra{1_y}-\gamma(t)^*\sin\varphi\sin R(t)\ket{1_y}\bra{0_y},\\
    \end{aligned}
\end{equation}
where $\gamma(t):= (\pi-4it)/\sqrt{\pi^2+16t^2}$.
The corresponding optimized reversing operators can then be obtained as
\begin{equation}
    \begin{aligned}
        R_0 & = \ket{0_y}\bra{1_y}+\tan\varphi\cot R(t)\,\gamma(t)^*\ket{1_y}\bra{0_y},\\
        R_1 & = \begin{cases}\tan\varphi\tan R(t)\ket{0_y}\bra{0_y}-\gamma(t)\ket{1_y}\bra{1_y},\quad 0\le t\le t_\varphi,\\ \ket{0_y}\bra{0_y}- \cot \varphi \cot R(t)\,\gamma(t)\ket{1_y}\bra{1_y},\quad t_\varphi\le t< \frac{\sqrt{3}\pi}{4},\end{cases}\\
        R_2 & = \tan\varphi\cot R(t)\,\gamma(t)^*\ket{0_y}\bra{0_y}+\ket{1_y}\bra{1_y},\\
        R_3 & = \begin{cases}-\gamma(t)\ket{0_y}\bra{1_y}+\tan\varphi\tan R(t)\ket{1_y}\bra{0_y},\quad 0\le t\le t_\varphi,\\ -\cot\varphi\cot R(t)\,\gamma(t)\ket{0_y}\bra{1_y}+\ket{1_y}\bra{0_y},\quad t_\varphi\le t<\frac{\sqrt{3}\pi}{4}, \end{cases}\\
    \end{aligned}
\end{equation}
where $t_\varphi = \sqrt{(\pi/2-\varphi)^2-(\pi/4)^2}$.

\begin{figure}[t]
\centering
\includegraphics[width=8.5cm]{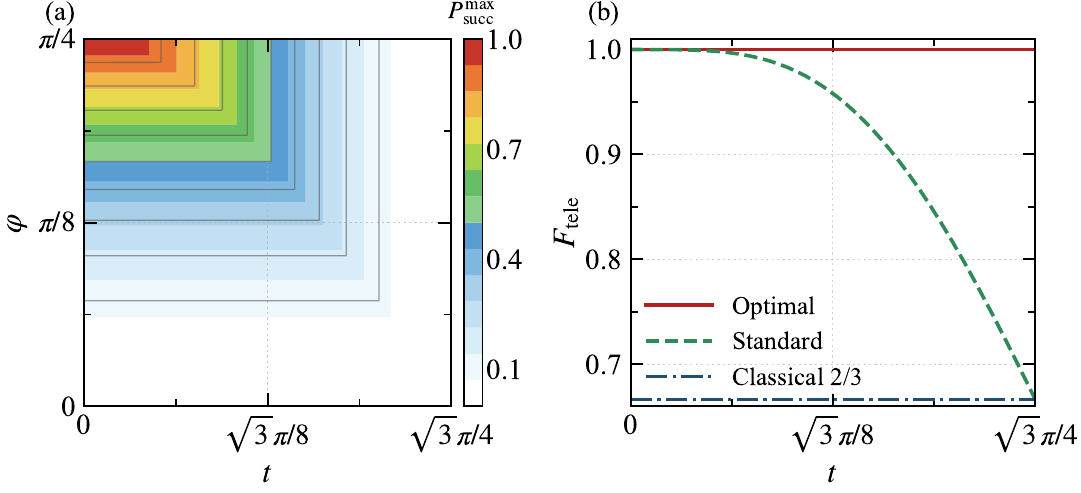}
\caption{Teleportation performance under a $ZX$-based BSM with $ZZ$-type coherent error. (a) Maximum success probability $P_\text{succ}^{\max}$ for faithful teleportation as a function of the error parameter $t$ and the channel parameter $\varphi$. (b) Teleportation fidelity $F_\text{tele}$ for a maximally entangled channel, comparing the optimized MR protocol with the standard protocol. While the standard fidelity decreases with $t$, the optimized protocol maintains unit-fidelity for all $t$.}
\label{fig:ZXerror}
\end{figure}


Notably, by applying the optimal reversing operation, unit teleportation fidelity can always be achieved with a non-zero success probability.
Applying Theorem 1, the maximum success probability for faithful teleportation is given by
\begin{equation}
P_\text{succ}^{\max} = 1-\max\{\cos2\varphi,|\cos2R(t)|\} 
\end{equation}
As shown in Fig.~\ref{fig:ZXerror}(a), the success probability is maximized at $t=0$ and $\varphi=\pi/4$ where both the measurement and the channel are maximally entangled, recovering deterministic and faithful teleportation. For a fixed $\varphi$, the success probability is highest near the ideal point $t=0$ and decreases as the error strength $t$ increases, since the coherent $ZZ$ error reduces the entanglement of the implemented joint measurement.
We further compare the maximum achievable teleportation fidelities of the optimized and standard protocols as a function of $t$ for a maximally entangled channel in Fig.~\ref{fig:ZXerror}(b).
As a result, unlike standard teleportation, the optimized protocol in our framework preserves unit teleportation fidelity for known coherent measurement error, while the imperfection appears instead as a reduction of the success probability.

As another realistic application, we consider the implementation of a Bell-state measurement (BSM) based on an $XX$ Ising interaction, as introduced in Sec.~\ref{sec:effect of coherent error}. The entangling operation is given by
\begin{equation}
R_{XX}(t) = e^{-i(\pi/4-t)X_1X_2},
\end{equation}
where the ideal entangling operation is achieved at $t=0$, while deviations in the entangling time $0 < t \le \pi/4$ induce coherent errors. The resulting BSM corresponds to a rotated, time-dependent projection of the form given in Eq.~\eqref{eq:deformedBB}. 
Consequently, the entanglement of the joint measurement basis decreases monotonically with $t$, as shown in Fig.~\ref{fig:coherent teleportation}(b). In the standard deterministic teleportation protocol, this directly leads to a reduction in teleportation fidelity, as illustrated in Fig.~\ref{fig:coherent teleportation}(a).

By applying our framework, the teleportation protocol can be optimized to achieve faithful teleportation with a maximized success probability. To illustrate this explicitly in a general setting, we consider a partially entangled quantum channel of the form $\ket{\Phi(\varphi)}=\cos\varphi\ket{00}+\sin\varphi\ket{11}$ with $0\le\varphi\le\pi/4$, whose entanglement is given by $E_c(\varphi) =\sin2\varphi$.
The corresponding Kraus operators of the overall quantum instrument, $M_r = E^{T}W_r^\dagger$ are given by
\begin{equation}
    \begin{aligned}
        M_0 & =\cos\varphi\sin\theta\ket{0}\bra{0}-i\sin\varphi\cos\theta\ket{1}\bra{1},\\
        M_1 & =-i\cos\varphi\cos\theta\ket{0}\bra{1}+\sin\varphi\sin\theta\ket{1}\bra{0},\\
        M_2 & =\cos\varphi\cos\theta\ket{0}\bra{0}+i\sin\varphi\sin\theta\ket{1}\bra{1},\\
        M_3 & =i\cos\varphi\sin\theta\ket{0}\bra{1}+\sin\varphi\cos\theta\ket{1}\bra{0},
    \end{aligned}
\end{equation}
where $\theta=\pi/4-t$. From Eq.~\eqref{eq:optimal reversing}, the optimal reversing operators for each outcome are given by
\begin{equation}
    \begin{aligned}
        R_0 & = \begin{cases}\ket{0}\bra{0}+i\cot\varphi\tan\theta\ket{1}\bra{1},\quad \theta\le \varphi,\\ \tan\varphi\cot\theta\ket{0}\bra{0}+i\ket{1}\bra{1},\quad \theta> \varphi,\end{cases}\\
        R_1 & = \sigma_x(i\tan\varphi\tan\theta\ket{0}\bra{0}+\ket{1}\bra{1}),\\
        R_2 & = \tan\varphi\tan\theta\ket{0}\bra{0}-i\ket{1}\bra{1},\\
        R_3 & = \begin{cases}\sigma_x(-i\ket{0}\bra{0}+\cot\varphi\tan\theta\ket{1}\bra{1}),\quad \theta\le \varphi,\\ \sigma_x(-i\tan\varphi\cot\theta\ket{0}\bra{0}+\ket{1}\bra{1}),\quad \theta>\varphi.\end{cases}
    \end{aligned}
\end{equation}

Therefore, by performing the optimal reversing operation at Bob's side,
a faithful teleportation is possible with unit teleportation fidelity on the successful reversal.
Applying Theorem \ref{thm:success probability}, the maximum success probability is evaluated as
\begin{equation}
    P_\text{succ}^{\max} = \begin{cases}2\sin^2(\frac{\pi}{4}-t),& t\ge \frac{\pi}{4}-\varphi\\ 2\sin^2\varphi,& t\le \frac{\pi}{4}-\varphi\end{cases},
\end{equation}
or equivalently, $P_\text{succ}^{\max} = 1-\sqrt{1-\min\{E_c,E_M\}^2}$ with $E_M(t) = \cos 2t$. Figure~\ref{fig:XXerror}(a) shows the maximum success probability as a function of the error parameter $t$ and the channel parameter $\varphi$.
This result shows that the teleportation performance is determined by the smaller of the two resources, namely, the entanglement of the channel and that of the joint measurement.
In contrast, in the standard teleportation protocol, where Bob’s correction is restricted to unitary operations, the average teleportation fidelity is given by $F_\text{tele}(\varphi,t) = (2+\sin2\varphi\cos2t)/3$, which is strictly less than unity away from the ideal point. Figure~\ref{fig:XXerror}(b) compares the maximum achievable teleportation fidelities of the standard and optimized protocols for a given coherent error in the BSM and maximally entangled channel.

\begin{figure}[t]
\centering
\includegraphics[width=8.5cm]{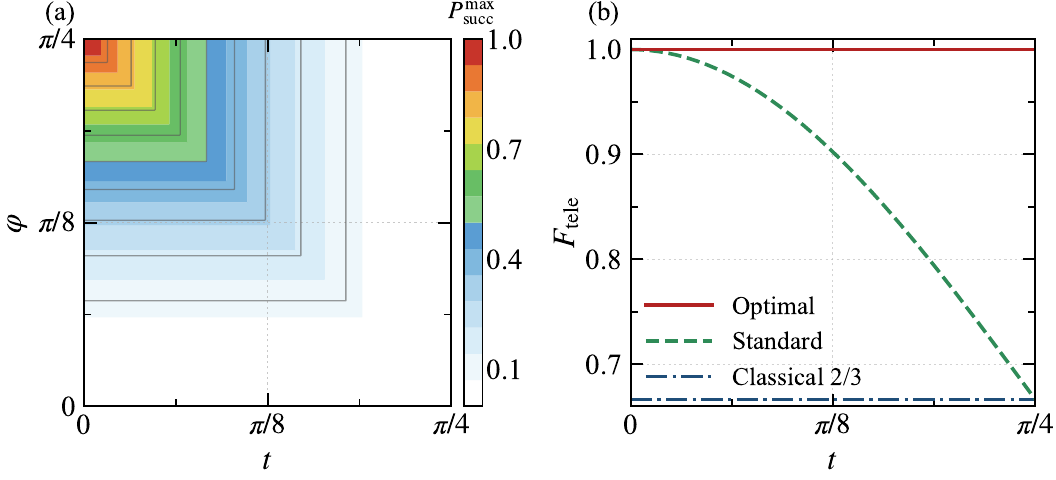}
\caption{Teleportation performance under $XX$-type coherent error in the BSM. (a) Maximum success probability $P_\text{succ}^{\max}$ for faithful teleportation as a function of the error parameter $t$ and the channel parameter $\varphi$. (b) Teleportation fidelity $F_\text{tele}$ for a maximally entangled channel, comparing the optimized MR protocol with the standard protocol. While the standard fidelity decreases with $t$, the optimized protocol maintains unit-fidelity for all $t$.}
\label{fig:XXerror}
\end{figure}

\section{Conclusion} 
\label{sec:conclusion}

We have investigated how the entanglement of joint measurements determines the performance of quantum teleportation in realistic settings where Bell-state measurements (BSMs) are subject to coherent errors. We showed that, within the standard deterministic teleportation protocol, any reduction in measurement entanglement inevitably leads to a degradation of teleportation fidelity, even when the shared quantum channel remains maximally entangled. To overcome this limitation, we introduced a generalized teleportation framework based on measurement reversal (MR), which enables probabilistic recovery of the input state. Within this framework, we demonstrated that unit-fidelity teleportation can always be restored for pure-state inputs, regardless of the degree of measurement entanglement, provided that the imperfection is known. Moreover, we derived explicit relations that quantitatively connect the entanglement of the channel, the entanglement of the joint measurement, and the success probability of faithful teleportation, and illustrated these results through both theoretical and realistic models.

We have shown that the entanglement of the joint measurement plays a role that is fully comparable to that of the channel entanglement in determining teleportation performance. In particular, we showed that the maximum success probability of faithful teleportation is fundamentally limited by the weaker of the two resources, revealing a clear operational interpretation of measurement entanglement. For qubit systems, we derived a complete analytical characterization that decomposes the performance into entanglement and alignment contributions, while for higher-dimensional systems we derived general bounds that monotonically improve with measurement entanglement. Although a closed-form expression is generally unavailable for $d>2$, our analysis demonstrates that the same framework extends naturally to higher dimensions and captures the essential structure governing teleportation performance beyond qubits.

While we have mainly focused on coherent errors that deform the measurement basis, the proposed framework can be further extended to incorporate stochastic noise and decoherence affecting either the joint measurement or the quantum channel. In such scenarios, the achievable teleportation fidelity becomes lower than unity even after optimization, as additional noise sources fundamentally limit the target fidelity~\cite{Lee2021}. Nevertheless, the MR framework still provides a systematic way to mitigate coherent components of the error and to optimize the trade-off between fidelity and success probability. This suggests that our approach can serve as a general tool for analyzing and improving teleportation protocols under realistic noise environments.

Quantum teleportation is a key primitive in a wide range of quantum technologies including gate teleportation for universal quantum computation~\cite{Gottesman1999,Wan19}, modular quantum computing architectures~\cite{Llewellyn20, Chou18}, and distributed quantum networks~\cite{Briegel1998,Raussendorf2001,Kimble2008,Lee20}. In these settings, coherent errors arising in entangling operations are ubiquitous and can significantly degrade performance if the standard deterministic protocol is used without modification. We stress that our results show that such limitations can be overcome by adopting a heralded strategy, where unit-fidelity teleportation is maintained while imperfections are shifted into a controllable success probability that can be optimized. The proposed method can be implemented without substantial modifications to existing hardware, as it relies only on classical post-processing and local filtering operations. We therefore expect that our framework will provide a practical route toward improving teleportation-based quantum protocols in current experimental platforms.

\section*{ACKNOWLEDGMENTS}
This research was funded by National Research Foundation of Korea (RS-2022-NR068812, RS-2024-00442762), Institute of Information \& Communications Technology Planning \& Evaluation (RS-2025-02263264), Korea Institute of Science and Technology (2E33541), and Global Partnership Program of Leading Universities in Quantum Science and Technology (RS-2025-02317602).

\appendix

\section{Post-measurement state after joint measurement}
\label{A:post-measurement}
We derive Eq. \eqref{eq:overall karus operator}, which identifies the effective Kraus operator induced by the shared quantum channel and the joint measurement.
Write the shared quantum channel, the rank-one measurement state, and the input state as 
\begin{equation}
    \begin{aligned}
       \ket{\Phi}_{ab} &= \sum_{i,j=0}^{d-1}E_{ij}\ket{i}_a\otimes \ket{j}_b\\ \ket{w_r}_{\bar{a}a}&=\sum_{m,n=0}^{d-1}(W_r)_{mn}\ket{m}_{\bar{a}}\ket{n}_{a}\\ \ket{\phi}_{\bar{a}} &= \sum_{p=0}^{d-1}\phi_p\ket{p}_{\bar{a}} \ . 
    \end{aligned}
\end{equation}
Then the post-measurement state at Bob's mode is 
\begin{equation}
    (\bra{w_r}_{\bar{a}a}\otimes I_b)(\ket{\phi}_{\bar{a}}\otimes \ket{\Phi}_{ab})  = \sum_{p,i,j}\phi_p E_{ij}(\bra{w_r}_{\bar{a}a}\ket{p}_{\bar{a}}\ket{i}_a)\ket{j}_b \ .
\end{equation}
Using
\begin{equation}
    \bra{w_r}_{\bar{a}a}\ket{p}_{\bar{a}}\ket{i}_a=\sum_{m,n}(W_r)^*_{mn}\braket{m|p}\braket{n|i} = (W_r)^*_{pi},
\end{equation}
we obtain
\begin{equation}
    (\bra{w_r}_{\bar{a}a}\otimes I_b)(\ket{\phi}_{\bar{a}}\otimes \ket{\Phi}_{ab})=\sum_{p,i,j}\phi_p E_{ij}(W_r)_{pi}^*\ket{j}_b\ .
\end{equation}
Rearranging the sums gives the following.
\begin{equation}
\begin{aligned}
(\langle w_r|_{\bar a a}\otimes I_b)(|\phi\rangle_{\bar a}\otimes|\Phi\rangle_{ab})
&=
\sum_j
\left(
\sum_{i,p} E_{ij}(W_r)^\ast_{pi}\phi_p
\right)
|j\rangle_b\\
&=
\sum_j (E^{T}W_r^\dagger|\phi\rangle)_j |j\rangle_b
\\&=
E^{T}W_r^\dagger|\phi\rangle_b \ .
\end{aligned}
\end{equation}

\section{Proof of Theorem \ref{thm:success probability}}
\label{sec: proof of thm1}

For qubits, Eq. \eqref{eq:optimal reversing} implies that the success probability for outcome $r$ is given by $p_{\text{succ},\phi}^r=\sigma_{\min}^2(M_r)$ where $M_r =E^{T}W_r^\dagger$. Thus, the problem reduces to computing the smallest singular value of $M_r$. Let $H_r$ be a positive $2\times 2$ matrix $M_rM_r^\dagger$, then
\begin{equation}
\label{eq:mini eigenvalue}
    \sigma_{\min}^2(M_r) = \frac{\mathrm{Tr}H_r -\sqrt{(\mathrm{Tr}H_r)^2-4\det H_r}}{2}\ .
\end{equation}
We first evaluate the determinant
\begin{equation}
    \det H_r = |\det M_r|^2 = |\det E|^2|\det W_r|^2 =\frac{E_c^2E_r^2}{16},
\end{equation}
where we used that, for a normalized two-qubit pure state with coefficient matrix $X$, the concurrence is $2|\det X|$.

For the trace, define $A:=E^* E^{T}$ and $B_r:=W_r^\dagger W_r$. Since both $A$ and $B_r$ are single-qubit density operators, they admit Bloch representations:
\begin{equation}
    A= \frac{1}{2}\left(I+\bar{E}_c\mathbf{u}\cdot \boldsymbol{\sigma}\right),\quad B_r= \frac{1}{2}\left(I+\bar{E}_r\mathbf{n}_r\cdot \boldsymbol{\sigma}\right),
\end{equation}
with Bloch radii $\bar{E}_c = \sqrt{1-E_c^2}$ and $\bar{E}_r = \sqrt{1-E_r^2}$. 
Here, $\mathbf{u}$ and $\mathbf{n}_r$ are Bloch directions and $\boldsymbol{\sigma} = (\sigma_x,\sigma_y,\sigma_z)$ is the Pauli vector.
Hence,
\begin{equation}
    \mathrm{Tr}H_r = \mathrm{Tr} (AB_r ) = \frac{1}{2}\left(1+\bar{E}_c\bar{E}_rx_r\right),
\end{equation}
where $x_r:=\mathbf{u}\cdot\mathbf{n}_r\in[-1,1]$.
Substituting the above expressions for $\det H_r$ and $\mathrm{Tr}H_r$ into Eq. \eqref{eq:mini eigenvalue} yields Eq. \eqref{eq:single success probability}, which completes the proof.

\section{Role of entanglement of joint measurement in higher dimensions}
\label{sec: high dim}

For $d>2$, unlike the qubit case, the entanglement of a bipartite pure state does not uniquely determine the singular-value spectrum of its coefficient matrix. As a result, the maximum success probability cannot be expressed solely in terms of a single entanglement parameter. Nevertheless, for a maximally entangled channel, we can derive tight bounds in terms of the entanglement of the joint measurement.

Assume that the shared channel is maximally entangled, such that $E = U/\sqrt{d}$ for some unitary $U$. 
Let $\{\lambda_i^r\}_{i=0}^{d-1}$ be singular values of $W_r$, ordered as $\lambda_0^r\ge \cdots \ge \lambda_{d-1}^{r}$. 
Since multiplication by a unitary does not affect singular values, the singular values of $M_r = E^{T} W_r^\dagger$ are given by $\{\lambda_i^r / \sqrt{d}\}$.
From Eq.~\eqref{eq:optimal reversing}, the maximum success probability is therefore $P_{\text{succ}}^{\max} = \sum_r(\lambda_{d-1}^r)^2/d$.
To quantify the entanglement of $\ket{w_r}$, we use the $G$-concurrence \cite{Gour2005}
\begin{equation}
    E_r = d\left(\prod_{i=0}^{d-1}\lambda_i^r\right)^{2/d} \ .
\end{equation}

\begin{proof}[Proof of Theorem \ref{thm:high-dim}]
The upper bound follows from the elementary inequality that the minimum of a set of nonnegative numbers cannot exceed their geometric mean:
\begin{equation}
\label{eq:qudit upper bound}
    (\lambda_{\min}^r)^2 \le \left(\prod_{i=0}^{d-1} (\lambda_i^r)^2\right)^{1/d}=\frac{E_r}{d},
\end{equation}
Summing over $r$ gives
\begin{equation}
    P_{\text{succ}}^{\max}=\frac{1}{d}\sum_r\left(\lambda_{d-1}^r\right)^2\le\frac{1}{d^2}\sum_rE_r \ .
\end{equation}
In particular, if the joint measurement forms a maximally entangled basis, then $\lambda_0^r = \cdots = \lambda_{d-1}^r=1/\sqrt{d}$ and $E_r = 1$ for all $r$. Hence Eq. \eqref{eq:qudit upper bound} is saturated for all $r$, and thus the upper bound is saturated.

For the lower bound, let $x_i^r = (\lambda_i^r)^2$ and $s_r:=x_{d-1}^r$.
Since $\sum_ix_i^r =1$, we have 
\begin{equation}
    x_0^r+\cdots +x_{d-2}^r=1-s_r\ .
\end{equation}
Applying the AM-GM inequality to $x_0^r,\ldots,x_{d-2}^r$, we obtain 
\begin{equation}
\label{eq:qudit AM-GM}
\prod_{i=0}^{d-2}x_i^r\le \left(\frac{1-s_r}{d-1}\right)^{d-1},
\end{equation}
and thus
\begin{equation}
\prod_{i=0}^{d-1}x_i^r=s_r\prod_{i=0}^{d-2}x_i^r\le s_r\left(\frac{1-s_r}{d-1}\right)^{d-1}\ .
\end{equation}
Using $\prod_ix_i^r = (E_r/d)^d$, this becomes
\begin{equation}
    \left(\frac{E_r}{d}\right)^d\le s_r\left(\frac{1-s_r}{d-1}\right)^{d-1}\ .
\end{equation}
Now, we define the function $g(t):= t\left(\frac{1-t}{d-1}\right)^{d-1}$.
On the interval $0\le t\le 1/d$, the function $g(t)$ is strictly increasing, so the above inequality implies $s_r\ge t_r$, where $t_r$ is the solution of
\begin{equation}
    g(t) = \left(\frac{E_r}{d}\right)^d\ .
\end{equation}
Hence, we get
\begin{equation}
\label{eq:qudit lower bound}
    P_\text{succ}^{\max}=\frac{1}{d}\sum_r s_r \ge \frac{1}{d} \sum_r t_r \ .
\end{equation}

Moreover, the lower bound in Eq. \eqref{eq:qudit lower bound} is attained when equality holds in Eq. \eqref{eq:qudit AM-GM} for every outcome $r$. By the equality condition of the AM-GM inequality, this occurs exactly when $x_0^r=\cdots =x_{d-2}^r = (1-s_r)/(d-1)$. In this case, 
\begin{equation}
    \left(\frac{E_r}{d}\right)^d = \prod_{i=0}^{d-1}x_i^r = s_r\left(\frac{1-s_r}{d-1}\right)^{d-1} =g(s_r),
\end{equation}
which implies $s_r= t_r$. Therefore, the lower bound is achieved when the singular values of $W_r$ are $\sqrt{(1-t_r)/(d-1)},\ldots, \sqrt{(1-t_r)/(d-1)}, \sqrt{t_r}$.
\end{proof}

Since $t_r= g^{-1}((E_r/d)^d)$ and $g$ are increasing in $[0,1/d]$, the lower bound is increasing monotonically in $E_r$. The upper bound also increases monotonically in $E_r$.
Thus, for a maximally entangled channel, both bounds improve monotonically as the entanglement of the joint measurement increases.
For $d=2$, the bound reduces to the closed-form expression in Theorem~\ref{thm:success probability}.
For $d=3$, we get $t_r = \frac{2}{3}+\frac{2}{3}\cos\left(\frac{1}{3}\arccos(2E_r^3-1)+\frac{2\pi}{3}\right)$, while for higher dimensions $d\ge 4$, $t_r$ can be obtained numerically.


\end{document}